\pgfplotsset{compat=newest}
\newtheorem{theorem}{Theorem}
\newtheorem{definition}{Definition}
\newtheorem{remark}{Remark}
\renewcommand{\S}{\mathcal{S}}
\newcommand{\T}{\mathcal{T}}
\newcommand{\M}{\mathcal{M}}
\newcounter{tempEquationCounter} 
\newcounter{thisEquationNumber}
\begin{document}
%
% paper title
% Titles are generally capitalized except for words such as a, an, and, as,
% at, but, by, for, in, nor, of, on, or, the, to and up, which are usually
% not capitalized unless they are the first or last word of the title.
% Linebreaks \\ can be used within to get better formatting as desired.
% Do not put math or special symbols in the title.
\title{Capacity Bounds on the Downlink of Symmetric, Multi-Relay, Single Receiver C-RAN Networks}
\title{Capacity Bounds on the Downlink of Symmetric, Multi-Relay, Single Receiver C-RAN Networks}
\author{Shirin~Saeedi Bidokhti,
        Gerhard~Kramer
      and Shlomo~Shamai (Shitz) % <-this % stops a space
\thanks{S. Saeedi Bidokhti is with the Department of Electrical Engineering at Stanford University, USA. G. Kramer is with the Department for Electrical and Computer Engineering, Technical University of Munich, Germany. S.~Shamai  is with the Department of Electrical Engineering, Technion, Israel (saeedi@stanford.edu, gerhard.kramer@tum.de, sshlomo@ee.technion.ac.il).}% <-this % stops a space
\thanks{{The work of S. Saeedi Bidokhti was supported by the Swiss National Science Foundation fellowship no. 158487. The work of G. Kramer was supported by the German Federal Ministry of Education and Research in the framework of the Alexander von Humboldt-Professorship. The work of S. Shamai was supported by the
European Union's Horizon 2020 Research and Innovation Program,
grant agreement no. 694630.}}}
\maketitle

% As a general rule, do not put math, special symbols or citations
% in the abstract or keywords.
\begin{abstract}
The downlink of symmetric Cloud Radio Access Networks (C-RANs) with multiple relays and a single receiver is studied. Lower and upper bounds are derived on the capacity. The lower bound is achieved by Marton's coding which facilitates dependence  among  the multiple-access channel inputs. The upper bound uses Ozarow's technique to augment the system with an auxiliary random variable. The bounds are studied over scalar Gaussian C-RANs and are shown to meet and characterize the capacity for interesting regimes of operation.
\end{abstract}

% Note that keywords are not normally used for peerreview papers.
%\begin{IEEEkeywords}
%IEEEtran, journal, \LaTeX, paper, template.
%\end{IEEEkeywords}

% For peer review papers, you can put extra information on the cover
% page as needed:
% \ifCLASSOPTIONpeerreview
% \begin{center} \bfseries EDICS Category: 3-BBND \end{center}
% \fi
%
% For peerreview papers, this IEEEtran command inserts a page break and
% creates the second title. It will be ignored for other modes.
\IEEEpeerreviewmaketitle

% The very first letter is a 2 line initial drop letter followed
% by the rest of the first word in caps.
% 
% form to use if the first word consists of a single letter:
% \IEEEPARstart{A}{demo} file is g..
% 
% form to use if you need the single drop letter followed by
% normal text (unknown if ever used by IEEE):
% \IEEEPARstart{A}{}demo file is ....
% 
% Some journals put the first two words in caps:
% \IEEEPARstart{T}{his demo} file is ....
% 
% Here we have the typical use of a "T" for an initial drop letter
% and "HIS" in caps to complete the first word.
%\IEEEPARstart{T}{his} demo file is intended to serve as a ``starter file''
%for IEEE journal papers produced under \LaTeX\ using
%IEEEtran.cls version 1.8a and later.
% You must have at least 2 lines in the paragraph with the drop letter
% (should never be an issue)
%I wish you the best of success.

%\section{Introduction}

% needed in second column of first page if using \IEEEpubid
%\IEEEpubidadjcol

%\newpage
\section{Introduction}
Cloud Radio Access Networks (C-RANs) are expected to be a part of  future mobile network architectures.  In  C-RANs,  information processing is done in a cloud-based central unit that is connected to remote radio heads (or relays) by rate-limited fronthaul links. C-RANs improve energy and bandwidth efficiency and reduce complexity of relays by facilitating centralized information processing and cooperative communication. We refer to\cite{ParkSimeoneSahinShamai14,Yu16, PengWangLauPoor15} and the references therein for an overview of the challenges and coding techniques for C-RANs. 
 Several coding schemes have been proposed in recent years for the downlink of C-RANs including message sharing \cite{DaiYu14}, backhaul compression \cite{Parkatal13}, hybrid schemes \cite{PatilYu14}, and generalized data sharing using Marton's coding \cite{LiuKang14, WangWiggerZaidi16}. 
While none of these schemes are known to be optimal,  \cite{YangLiuKangShamai16} has proved an upper bound on the sum-rate  of $2$-relay C-RANs with two users and numerically compared the performance of the aforementioned schemes with the upper bound.

We consider the downlink of a C-RAN with multiple relays and a single user. This network may be modeled by an  $M$-relay diamond network where the broadcast component is modeled by rate-limited  links and the multiaccess component is modeled by a memoryless multiple access channel (MAC), see Fig. \ref{two-user-mac}. The capacity of this class of  networks is not known in general, but lower and upper bounds were derived in \cite{TraskovKramer07,KangLiuChong15,SaeediKramer16} for $2$-relay networks. Moreover, the capacity was found for binary adder MACs \cite{SaeediKramer16}, and for certain regimes of operation in Gaussian MACs \cite{KangLiuChong15,SaeediKramer16}. 
In this work, we derive  lower and upper bounds for symmetric C-RANs with multiple relays and find the capacity in interesting regimes of operation for symmetric Gaussian C-RANs. %(depending on the number of relays, power, and broadcast link capacities). 

The rest of the paper is organized as follows. Section \ref{prel} introduces  notation and the problem setup. In Section~\ref{sec:lowerbound}, we propose a coding strategy based on Marton's coding  and discuss simplifications for symmetric networks. In Section~\ref{secupp}, we generalize the bounding technique in \cite{KangLiuChong15,SaeediKramer16}. The case of Gaussian C-RANs is studied in Section \ref{sec:GaussianCRAN}, where we compute  lower and upper bounds and show that they meet in certain  regimes of operation characterized in terms of power,  number of users, and broadcast link capacities.
\section{Preliminaries and Problem Setup}
\label{prel}
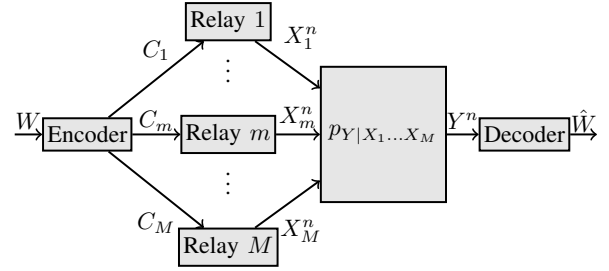
\begin{figure}[t!]
\centering
\begin{tikzpicture}[scale=.75]
\definecolor{light-gray}{gray}{0.9}
\tikzstyle{every node}=[draw,,line width=.75pt,shape=circle,font=\normalsize,scale=.9];

\path (-1.5,1.5) node[draw=none] (s0) {};
\path (0,1.5) node[shape=rectangle,fill=light-gray] (s) {$\hspace{-.05cm}\text{Encoder}\hspace{-.05cm}$};

\path (2.5,3.5) node[shape=rectangle,fill=light-gray] (r1) {$\hspace{-.05cm}\text{Relay  $1$}\hspace{-.05cm}$};
\path (2.5,1.5) node[shape=rectangle,fill=light-gray] (r2) {$\hspace{-.05cm}\text{Relay  $m$}\hspace{-.05cm}$};
\path (2.5,-.5) node[shape=rectangle,fill=light-gray] (rm) {$\hspace{-.05cm}\text{Relay  $M$}\hspace{-.05cm}$};
\path (2.5,.75) node[draw=none] (vd) {$\vdots$};
\path (2.5,2.75) node[draw=none] (vd) {$\vdots$};

\path (5.25,1.5) node[rectangle,minimum height=2cm,fill=light-gray] (mac) {$p_{Y|X_1\ldots X_M}$};
%\path (5.25,3) node[draw=none] (z) {$Z^n$};

\path (7.75,1.5) node[shape=rectangle,fill=light-gray] (d) {$\hspace{-.05cm}\text{Decoder}\hspace{-.05cm}$};
\path (9.25,1.5) node[draw=none] (ds) {};

%\draw[->]
  %  (v0) -- (v1) ;
\draw[->,line width=.75pt] (s0) --node[draw=none,yshift=.2cm]{$W$} (s);  
\draw[->,line width=.75pt] (d) --node[draw=none,yshift=.2cm]{$\hat{W}$} (ds);  
\draw[->,line width=.75pt] (s) --node[above,draw=none]{$C_1$} (r1);
\draw[->,line width=.75pt] (s) --node[yshift=.2cm,draw=none]{$C_m$} (r2);
\draw[->,line width=.75pt] (s) --node[below,,draw=none]{$C_M$} (rm);
\draw[->,line width=.75pt] (r1) --node[draw=none,xshift=.2cm,yshift=.4cm]{$X^n_1$} (mac);
\draw[->,line width=.75pt] (r2) --node[draw=none,yshift=.25cm]{$X^n_m$} (mac);
\draw[->,line width=.75pt] (rm) --node[draw=none,xshift=.2cm,yshift=-.4cm]{$X^n_M$} (mac);
\draw[->,line width=.75pt] (mac) --node[draw=none,yshift=.2cm]{$Y^n$} (d);
%\draw[->] (z) --node[draw=none,right]{} (mac);

\end{tikzpicture}
\caption{A C-RAN downlink.}
\label{two-user-mac}
\vspace{-.2cm}
\end{figure}
\subsection{Notation}
Random variables are denoted by uppercase letters, e.g. $X$,  their realizations are denoted by lowercase letters, e.g. $x$, and their corresponding probabilities are denoted by $p_X(x)$ or $p(x)$. The probability mass function (pmf) describing $X$ is denoted by $p_X$. $\mathcal{T}^n_\epsilon(X)$ denotes the set of sequences that are $\epsilon$-typical with respect to $P_X$ \cite[Page 25]{ElGamalKim}. When $P_X$ is clear from the context we write $\mathcal{T}^n_\epsilon$.
The entropy of $X$ is denoted by $H(X)$,  the conditional entropy of $X$ given $Y$ is denoted by $H(X|Y)$ and the mutual information between $X$ and $Y$ is denoted by $I(X;Y)$. Similarly, differential entropies  and conditional differential entropies are denoted by $h(X)$ and $h(X|Y)$. 

Matrices are denoted by bold letters, e.g. $\mathbf{K}$. We denote the entry of matrix $\mathbf{K}$ in row $i$ and column $j$ by $\mathbf{K}_{ij}$. Sets are denoted by script letters, e.g., $\mathcal{S}$. The cartesian product of $\mathcal{S}_1$ and $\mathcal{S}_2$ is denoted by $\mathcal{S}_1\times\mathcal{S}_2$, and the $n$-fold Cartesian product of $\mathcal{S}$ is denoted by $\mathcal{S}^n$. The cardinality of $\mathcal{S}$  is denoted by $|\mathcal{S}|$.  

Given the set $\mathcal{S}=\{s_1,\ldots,s_{|\mathcal{S}|}\}$, $X_{\mathcal{S}}$ denotes the tuple $(X_{s_1},\ldots,X_{s_{|\mathcal{S}|}})$.  The random string $X_1,\ldots,X_n$ is denoted by $X^n$.  $I(X_\S)$ is defined as follows (see \cite[Eqn (74)]{VenkataramaniKramerGoyal03}):
 \begin{align}
\label{eq:multi-inf}
I(X_\S)=\sum_{m\in\S}H(X_m)-H(X_\S).
\end{align}
For example, when $\mathcal{S}=\{s_1,s_2\}$, \eqref{eq:multi-inf} becomes the mutual information $I(X_{s_1};X_{s_2})$. The conditional version of~\eqref{eq:multi-inf}, $I(X_\S|U)$, is defined similarly by conditioning all terms in~\eqref{eq:multi-inf} on $U$. Note that  $I(X_\S|U)$ is non-negative.% by the sub-modularity of (conditional) entropy.
%\begin{align}
%\label{eq:multi-inf2}
%I(X_\S|U)=\sum_{m\in\S}h(X_m|U)-h(X_\S|U).
%\end{align}

%When $X$ is a Bernoulli random variable with $p_X(0)=q$, its entropy in bits is  $h_2(q)=-q\log_2(q)-(1-q)\log_2(1-q)$. The pair of random variables $(X,Y)$ is said to be a doubly symmetric binary source with parameter $p$ if $p_{XY}(0,0)=p_{XY}(1,1)=\frac{1-p}{2}$, and  $p_{XY}(0,1)=p_{XY}(1,0)=\frac{p}{2}$. Throughout this paper, all logarithms are to the base $2$. For a real number $x$, we denote $\max(x,0)$  by $x^+$. %The capacity of the diamond network is denoted by $C^{\diamond}$. 

\subsection{Model}
Consider the C-RAN  in Fig. \ref{two-user-mac}, where a source communicates a message $W$ with $nR$ bits to a sink with the help of $M$ relays.  
Let  $\M=\{1,\ldots,M\}$ be the set of relays.

The source encodes $W$ into descriptions $\mathbb{V}_1,\ldots,\mathbb V_M$ that are provided to relays $1,\ldots,M$, respectively. We focus mainly on symmetric networks where $\mathbb V_m$ satisfies
\begin{align}
H(\mathbb V_m)\leq nC,\qquad m=1,\ldots,M.
\end{align}
Each relay $m$, $m=1,\ldots,M$,  maps its  description $\mathbb V_m$ into a sequence $X_m^n$ which is sent over a  multiple access channel. The MAC is characterized by the  input alphabets $\mathcal{X}_1, \ldots, \mathcal{X}_M$, the output alphabet $\mathcal{Y}$, and the transitional probabilities $p(y|x_1,\ldots,x_M)$ for all $(x_1,\ldots,x_M,y)\in\mathcal{X}_1\times\ldots\times\mathcal{X}_M\times\mathcal{Y}$. From the received sequence $Y^n$, the sink decodes an estimate $\hat{W}$ of $W$. 

A coding scheme consists of an encoder, $M$ relay mappings, and a decoder, and is said to achieve
the rate $R$ if, by choosing $n$ sufficiently large, we can make the error probability $\Pr(\hat{W}\neq W)$ as small as desired. We are interested in characterizing the largest achievable rate $R$. We refer to the maximum  rate as the capacity $C^{(M)}$ of the network.

In this work, we focus on symmetric networks:% defined as follows:
\begin{definition}
\label{def:symnet}
The network in Fig. \ref{two-user-mac} is symmetric if we have
\begin{align}
&C_1=\ldots=C_M=:C\\
&\mathcal{X}_1=\ldots=\mathcal{X}_M=:\mathcal{X}
\end{align}
and 
\begin{align}
p_{Y|X_1\ldots X_M}(y|x_1,\ldots,x_M)\!=\!p_{Y|X_1\ldots X_M}(y|x^\prime_1,\ldots,x^\prime_M)\label{chansym}
\end{align}
for all $y\in\mathcal{Y}$, $(x_1,\ldots,x_M)\in \mathcal{X}^M$ and any of its permutations $(x^\prime_1,\ldots,x^\prime_M)$.
\end{definition}

When the MAC is Gaussian, the input and output alphabets are the set of real numbers and the output is given by
\begin{align}
&Y=\sum_{m=1}^MX_M+Z\label{eq:GMACdef1}
\end{align}
where $Z$ is Gaussian noise with zero mean and unit variance.  We consider average block power constraints $P_1,\ldots,P_M$:
\begin{align}
&\frac{1}{n}\sum_{i=1}^n\mathbb{E}(X_{m,i}^2)\leq P_m,\qquad m=1,\ldots,M.\label{eq:GMACdef2}
\end{align}
The Gaussian C-RAN is symmetric if $C_m=C$ and $P_m=P$ for all $m=1,\ldots,M$.

\section{A Lower Bound}
\label{sec:lowerbound}

We outline an achievable scheme based on Marton's coding. We remark that this scheme can be improved for certain regimes of $C$ by using superposition coding (e.g., see \cite[Theorem 2]{SaeediKramer14} and \cite[Theorem 2]{KangLiuChong15}). %Nonetheless, we show that it is capacity achieving for a wide range of $C$.

Fix the pmf $p(x_1,\ldots,x_M)$, $\epsilon>0$, and the auxiliary rates $R_m,R^\prime_m$, $m=1,\ldots,M$, such that
\begin{align}
& R_m,R^\prime_m\geq 0\label{en:nonneg}\\
&R_m+R^\prime_m\leq C_m.\label{en:set1}
\end{align}
%The scheme consists of a codebook, an encoding function, $M$ relaying mappings, and a decoding function, as described below.
\subsubsection{Codebook construction}
Set 
\begin{align}
&R=\sum_{m=1}^MR_m.\label{en:set2}
\end{align}
For every $m=1,\ldots,M$, generate $2^{n(R_m+R^\prime_m)}$ sequences $x_m^n(w_m,w^\prime_m)$, $w_M=1,\ldots,2^{nR_m}$, $w^\prime_M=1,\ldots,2^{nR^\prime_m}$,  in an i.i.d manner according to $\prod_{\ell}P_{X_\ell}(x_{m,\ell})$, independently across $m=1,\dots,M$. For each bin index $(w_{1},\ldots,w_{M})$, pick a jointly typical sequence tuple 
\begin{align}
\label{en:jt1}
(x_1^n(w_1,w^{\prime\star}_1),\ldots,x_M^n(w_M,w^{\prime\star}_M))\in \mathcal{T}_\epsilon^n.
\end{align}

\subsubsection{Encoding}
Represent  message $w$ as a tuple $(w_1,\ldots,w_M)$, and send $(w_m,w^{\prime\star}_m)$ to relay $m$, $m=1,\ldots,M$. 
\subsubsection{Relay mapping at relay $m$, $m=1,\ldots,M$}
Relay $m$ sends $X_m^n(w_m,w^{\prime\star}_m)$ over the MAC.

\subsubsection{Decoding}
Upon receiving $y^n$, the receiver looks for indices $\hat{w}_1,\ldots,\hat{w}_M$ for which the following joint typicality test holds for some  $\hat{w}^\prime_1,\ldots,\hat{w}^\prime_M$:
\begin{align}
\label{de:jt2}
(x_1^n(\hat{w}_1,\hat{w}^{\prime}_1),\ldots,x_M^n(\hat{w}_M,\hat{w}^{\prime}_M),y^n)\in \mathcal{T}_\epsilon^n.
\end{align}

We show in Appendix \ref{chert} 
%\cite[Appendix A]{SaeediKramerShamaiLong17} 
that the above scheme has a vanishing error probability as $n\to \infty$ if in addition to \eqref{en:nonneg}--\eqref{en:set2} we have
\begin{align}
\sum_{m\in\S}\!R^\prime_m\!\geq& I(X_\S),\qquad \forall\S\subseteq\M\label{en:en}\\
\sum_{m\in\S}\! R_m\!+\!R^\prime_m\!\leq& I(X_{{\S}};Y|X_{\bar{\S}})\!+\!I(X_\M)\!-\!I(X_{\bar{\S}}),\quad \forall\S\subseteq\M\label{en:de}.
\end{align}
%for some pmf that satisfies \eqref{symm}.

One can use  Fourier-Motzkin elimination  to eliminate $R_m,R^\prime_m$, $m=1,\ldots,M$, from \eqref{en:nonneg}--\eqref{en:set2}, \eqref{en:en}, \eqref{en:de}, and characterize the set of achievable rates $R$.
% and obtain the following:
%\begin{align}
%C^M\geq \max_{\substack{p(\mathbf{x})\\I(X_\S)\leq \sum_{m\in\S}C_m,\ \forall \S\subset\M}}\left\{ \begin{array}{l}\sum_{m=1}^MC_m-I(X_\M)\\
%\min_{\S\subseteq\M} \left(\sum_{m\in\S}C_m+I(X_{\bar{\S}};Y|X_\S)\right)\end{array}\right\}.\label{thm:lowerbound}
%\end{align}
% I checked this for $M\leq4$. Should hold in general. How to prove?
For  symmetric networks (see Definition \ref{def:symnet}), we  bypass the above step and proceed by  choosing  $p_{X_\M}$ to be ``symmetric".
%For symmetric networks where  the MAC is symmetric with respect to its inputs and
%\begin{align}
%C_m=C,\qquad m=1,\ldots,M,
%\end{align}
%we simplify  the optimization problem defined in \eqref{en:set1}, \eqref{en:set2}, \eqref{en:en}, \eqref{en:de}  by 
We say $p_{X_\M}$ is symmetric if 
\begin{align}
\mathcal{X}_1=\ldots=\mathcal{X}_M=\mathcal{X}
\end{align}
and for all subsets $\S, \S^\prime\subseteq\M$ with $|\S|=|\S^\prime|$ we have
\begin{align}
p_{X_\S}(x_1,\ldots,x_{|\S|})=p_{X_{\S^\prime}}&(x_1,\ldots,x_{|\S|}),\nonumber\\&\quad \forall (x_1,\ldots,x_{|\S|})\in \mathcal{X}^{|\S|}.\label{symm1}
\end{align}
We simplify the problem defined by \eqref{en:nonneg}--\eqref{en:set2}, \eqref{en:en}, \eqref{en:de} for symmetric distributions and prove the following result in Appendix~\ref{ap-simplify}. %\cite[ Appendix B]{SaeediKramerShamaiLong17}.
\begin{theorem}
\label{thm:LB}
For symmetric C-RAN downlinks, the rate $R$ is achievable if
\begin{align}
&R\leq MC-I(X_\M)\label{reg:3user1}\\
&R\leq I(X_{\M};Y)\label{reg:3user2}
\end{align}
for some symmetric distribution $p_{X_\M}$.
\end{theorem}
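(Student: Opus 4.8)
\emph{Proof proposal.} The plan is to substitute a symmetric choice of the auxiliary rates $R_m,R'_m$ into the achievable region \eqref{en:nonneg}--\eqref{en:set2}, \eqref{en:en}, \eqref{en:de} and show that the (exponentially many) constraints there collapse to \eqref{reg:3user1}--\eqref{reg:3user2} by means of two exchangeability properties of symmetric distributions. Since $p_{X_\M}$ and the channel \eqref{chansym} are invariant under permutations of the relay indices, $I(X_\S)$ and $I(X_\S;Y\mid X_{\bar\S})$ depend on $\S$ only through $|\S|$, so by symmetry we may take any $k$-element subset to be $[k]:=\{1,\dots,k\}$; moreover the feasible set in $(R_m,R'_m)$ is a permutation-invariant polytope, so by convexity it suffices to exhibit one feasible point in the symmetric subspace. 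The candidate I would use is $R_m=R/M$ and $R'_m=I(X_\M)/M$ for every $m$, and I would verify directly that it is feasible whenever \eqref{reg:3user1}--\eqref{reg:3user2} hold.

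Constraints \eqref{en:nonneg} and \eqref{en:set2} are immediate, and \eqref{en:set1} reduces to $R/M+I(X_\M)/M\le C$, i.e.\ to \eqref{reg:3user1}. For $|\S|=k$, the encoding bound \eqref{en:en} becomes $k\,I(X_\M)/M\ge I(X_{[k]})$; since $I(X_{[k]})=\sum_{m\le k}H(X_m)-H(X_{[k]})=kH(X_1)-H(X_{[k]})$ by symmetry, this is equivalent to $\tfrac1kH(X_{[k]})\ge\tfrac1MH(X_\M)$. That holds because the normalized joint entropy of exchangeable random variables is non-increasing in the number of variables: the increments $H(X_j\mid X_{[j-1]})$ are non-increasing in $j$, so their running averages $\tfrac1kH(X_{[k]})$ are as well. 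Hence \eqref{en:en} imposes no further condition on $R$, and one also sees that $I(X_\M)/M$ is the smallest common value of $R'_m$ consistent with \eqref{en:en}.

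The decoding constraints \eqref{en:de} are the crux. For $|\S|=k$ (so $|\bar\S|=M-k$), I would expand all mutual informations into entropies and regroup $H(Y\mid X_{\bar\S})+H(X_{\bar\S})=H(X_{\bar\S},Y)$, and similarly with $\bar\S$ replaced by $\M$, to rewrite the right-hand side of \eqref{en:de} as $G_k:=H(X_{[M-k]},Y)-H(X_\M,Y)+kH(X_1)$ (with $[0]:=\emptyset$), while the left-hand side becomes $\tfrac kM\big(R+I(X_\M)\big)$. At $k=M$ one checks $G_M=I(X_\M;Y)+I(X_\M)$, so the $k=M$ constraint is exactly \eqref{reg:3user2}; for $k<M$ it then suffices to show $G_k\ge\tfrac kM G_M$. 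After cancellation this is $H(X_{[j]},Y)\ge\tfrac{M-j}{M}H(Y)+\tfrac jM H(X_\M,Y)$ with $j=M-k$, i.e.\ the statement that $j\mapsto H(X_{[j]},Y)$ lies above the chord joining its two endpoints. This follows from concavity of that map, whose increments $H(X_j\mid X_{[j-1]},Y)$ are non-increasing in $j$ because $X_1,\dots,X_M$ are exchangeable jointly with $Y$: swapping the roles of $X_j$ and $X_{j+1}$ gives $H(X_j\mid X_{[j-1]},Y)=H(X_{j+1}\mid X_{[j-1]},Y)\ge H(X_{j+1}\mid X_{[j]},Y)$. Together with $R\le I(X_\M;Y)$ this yields \eqref{en:de} for all $\S$, completing the verification.

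I expect the only genuine difficulty to be the bookkeeping in the decoding step: the entropies must be regrouped exactly so as to expose the concave function $j\mapsto H(X_{[j]},Y)$, and one has to be careful that it is the channel's permutation invariance \eqref{chansym}, not merely the symmetry of $p_{X_\M}$, that both makes $I(X_\S;Y\mid X_{\bar\S})$ a function of $|\S|$ alone and supplies the joint exchangeability of $(X_1,\dots,X_M,Y)$ needed for the concavity. The two exchangeability inequalities themselves are classical; the conceptual content is recognizing that the encoding one forces the minimal symmetric choice $R'_m=I(X_\M)/M$, the link bound then produces \eqref{reg:3user1}, and the decoding one reduces \eqref{en:de} to its single instance $\S=\M$, namely \eqref{reg:3user2}.
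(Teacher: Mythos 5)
Your proposal is correct and takes essentially the same route as the paper's proof in Appendix~\ref{ap-simplify}: restrict to a symmetric auxiliary-rate allocation ($R_m=R/M$, $R'_m=I(X_\M)/M$, i.e.\ the paper's $\tilde R,\tilde R'$ at the binding point) and show that, under a symmetric $p_{X_\M}$ together with the channel symmetry \eqref{chansym}, the constraints \eqref{en:en}--\eqref{en:de} are dominated by their $\S=\M$ instances, from which \eqref{reg:3user1}--\eqref{reg:3user2} follow. Your two key facts --- the Han-type monotonicity of $\tfrac1k H(X_{[k]})$ and the chord bound from concavity of $j\mapsto H(X_{[j]},Y)$ --- are equivalent restatements of the paper's two lemmas ($I(X_\S)/|\S|$ non-decreasing in $|\S|$, and the normalized decoding bound minimized at $\S=\M$), which the paper establishes by telescoping mutual-information identities rather than by the non-increasing-increment/concavity argument, a purely presentational difference.
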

%\ssb{comment  if it is a special case of past work}

%
%Let $\mathbf{K}_{L}$, $L=1,\ldots,M$, denote the sub-matrix of $\mathbf{K}_{\text{G}}$ consisting of the first $L$ rows and columns and define $\mathbf{K}_{L,Y}$, which is $(1+L)\times(1+L)$, by
%\begin{align}
%\left[\begin{array}{llll}1\!+\!PM(1\!+\!(M\!-\! 1)\rho)&\!\!\!P(1\!+\!(M\!-\!1)\rho)&\!\!\!\ldots&\!\!\!\!\!P(1\!+\!(M\!-\!1)\rho)\\P(1\!+\!(M\!-\!1)\rho)&&&\\\vdots&&\!\!\!\!\!\!\!\mathbf{K}_{L}&\\P(1\!+\!(M\!-\!1)\rho)&&&\end{array}\!\!\right].
%\end{align}
%With these definitions, Corollary \ref{cor-Gaussian} below gives a lower bound on $C^M$.
%

\section{An upper bound}
\begin{figure*}
\begin{align}
C^{(M)}\leq\max_{p(\mathbf{x})}\min_{\substack{p(u|\mathbf{x}y)\\=p(u|y)}}\max_{\substack{p(q|\mathbf{x},u,y)\\=p(q|\mathbf{x})}}\left\{\begin{array}{l} MC -(M\!-\!1) H(U|Q)+\sum_{m=1}^M H(U|X_mQ)-H(U|X_\M )\\
\min_{\S\subseteq\M}|\S|C+I(X_{\bar{\S}};Y|X_\S Q)\end{array}\right\}.\label{thm:upperbound}
\end{align}
\hrule
\vspace{-.2cm}
\end{figure*}

\label{secupp}
Our upper bound is motivated by \cite{SaeediKramer16,VenkataramaniKramerGoyal03,Ozarow80,KangLiuChong15}.
\begin{theorem}
\label{thm:upperbound-statement}
The capacity $C^{(M)}$ is upper bounded as shown in \eqref{thm:upperbound} on the top of this page, where $Q-X_\M-Y-U$ forms a Markov chain. Moreover, the alphabet size of $Q$ may be chosen to satisfy {$|\mathcal{Q}|\leq \prod_{i=1}^M|\mathcal{X}_i|+2^M-1 $}.
\end{theorem}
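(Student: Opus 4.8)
The plan is to follow the classical converse recipe for relay/diamond networks, augmented with Ozarow's auxiliary-variable trick à la \cite{KangLiuChong15,SaeediKramer16}, and then to insert a second auxiliary $Q$ via a min-max sandwiching argument. First I would fix a rate-$R$ code of blocklength $n$ with vanishing error probability and apply Fano's inequality: $nR \le I(W;Y^n)+n\epsilon_n$. The two families of constraints in \eqref{thm:upperbound} come from two different ways of cutting the network. For the cut-set-type bound (the second row), for any $\S\subseteq\M$ I would write $nR \le I(W; Y^n, \mathbb{V}_\S) + n\epsilon_n = H(\mathbb{V}_\S) + I(W;Y^n\mid \mathbb{V}_\S) + n\epsilon_n \le |\S|nC + I(X_{\bar\S}^n; Y^n \mid X_\S^n) + n\epsilon_n$, using $H(\mathbb{V}_\S)\le |\S|nC$, the fact that $X_\S^n$ is a function of $\mathbb{V}_\S$, and the Markov structure $W-\mathbb{V}_{\bar\S}-X_{\bar\S}^n$ given $\mathbb{V}_\S$. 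Single-letterizing the mutual-information term with a uniform time-sharing variable then produces the $|\S|C + I(X_{\bar\S};Y\mid X_\S Q)$ expression, with $Q$ playing the role of the time index.

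For the first row I would introduce the auxiliary $U^n$: the standard move (as in \cite{VenkataramaniKramerGoyal03,Ozarow80}) is to let $U$ be a "shared" random variable correlated with $Y$ only through $Y$ itself — hence the constraint $p(u\mid\mathbf{x}y)=p(u\mid y)$ and the Markov chain $Q-X_\M-Y-U$. One starts from $nR\le I(W;Y^n U^n)+n\epsilon_n$ or, more precisely, adds and subtracts entropy terms for $U^n$ so that a genie providing $U^n$ does not help too much; the bound $MC$ enters because $W$ is recoverable from $(\mathbb{V}_1,\dots,\mathbb{V}_M)$ and $\sum_m H(\mathbb{V}_m)\le MnC$. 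The key identity is a conditional-entropy decomposition: writing $\sum_{m} H(U^n\mid X_m^n) - (M-1)H(U^n) - H(U^n\mid X_\M^n)$ and bounding $H(\mathbb{V}_m\mid \text{rest})$ by the broadcast-link budget, one arrives after single-letterization (again with the time-sharing $Q$) at $MC - (M-1)H(U\mid Q) + \sum_m H(U\mid X_m Q) - H(U\mid X_\M)$. The $\min$ over $p(u\mid y)$ is legitimate because the genie argument holds for every choice of the conditional law of $U$ given $Y$; the $\max$ over $p(q\mid\mathbf{x})$ and the outer $\max$ over $p(\mathbf{x})$ arise because the code induces \emph{some} such distributions, and one upper-bounds by the best case.

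The cardinality bound on $\mathcal Q$ is a routine application of the convex-cover / Fenchel--Eggleston--Carathéodory method: the quantities that must be preserved as $Q$ is replaced by a finite-support variable are the $M$ input marginals $p_{X_i}$ (costing $\prod_i|\mathcal X_i|-1$ constraints to pin down $p_{\mathbf x}$, hence $\prod_i|\mathcal X_i|$ dimensions counting the one free normalization slot) together with the $2^M$ functionals appearing in \eqref{thm:upperbound} — the single first-row expression and the $2^M-1$ nontrivial cut values $|\S|C+I(X_{\bar\S};Y\mid X_\S Q)$ — minus one for the shared normalization, giving $\prod_{i=1}^M|\mathcal X_i| + 2^M - 1$. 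I would need to check that these functionals are continuous in $p(q,\mathbf x)$ so the support lemma applies, and that fixing $p_{\mathbf x}$ and these functionals indeed fixes the value of the bound.

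The main obstacle I anticipate is getting the auxiliary-variable bookkeeping exactly right in the first row: one must choose the genie information and the add/subtract pattern so that (i) the coefficient of $H(U\mid Q)$ comes out to $-(M-1)$ and not something else, (ii) the Markov chain $Q-X_\M-Y-U$ is respected throughout, and (iii) the min over $p(u\mid y)$ can be pulled outside. This is where the cited works \cite{Ozarow80,VenkataramaniKramerGoyal03,SaeediKramer16,KangLiuChong15} do the heavy lifting, and adapting their two-variable argument to the symmetric $M$-relay setting — in particular tracking which entropy terms survive when all $M$ broadcast links are cut simultaneously versus one at a time — will require care. By contrast, the cut-set row and the cardinality bound are essentially mechanical.
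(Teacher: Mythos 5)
Your overall architecture matches the paper's proof: Fano plus the fronthaul budget, a cut-set-type argument for the second row, Ozarow's auxiliary $U$ generated from $Y$ by a test channel, single-letterization with a time-sharing variable, and Fenchel--Eggleston--Carath\'eodory for $|\mathcal{Q}|$. But there are two concrete gaps, both located exactly in the part you defer to the cited works. First, the first row does not come from $nR\le I(W;Y^nU^n)$ nor from bounding conditional entropies of the $\mathbb{V}_m$ "by the broadcast-link budget"; it is a $Y$-free bound on the cost of correlating the inputs: $nR\le H(\mathbb{V}_\M)=\sum_{m}H(\mathbb{V}_m)-I(\mathbb{V}_\M)\le nMC-I(X_\M^n)$, where $I(\cdot)$ is the multi-information of \eqref{eq:multi-inf} and the last step is data processing. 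Ozarow's trick is then the single inequality $I(X_\M^n)\ge I(X_\M^n)-I(X_\M^n|U^n)=(M-1)H(U^n)-\sum_m H(U^n|X_m^n)+H(U^n|X_\M^n)$, valid simply because $I(X_\M^n|U^n)\ge 0$; this is what makes the coefficient $-(M-1)$ appear, and the "genie providing $U^n$" route you sketch would not produce a bound in which $Y^n$ is entirely absent from the first row.

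Second, $Q$ cannot be merely the time index: in the paper $Q=(U^{T-1},T)$. This is forced because $H(U^n)=\sum_i H(U_i|U^{i-1})$ enters with a \emph{negative} sign, so the conditioning on past $U$'s cannot be discarded and must be absorbed into $Q$ (with $Q=T$ alone the first-row single-letterization fails in the wrong direction). The same $Q$ must then serve the cut-set row, which is why the paper first inserts $U^{i-1}$ there via the Markov chain $Y_i-X_\S^nY^{i-1}-U^{i-1}$ before dropping $Y^{i-1}$ and the off-time inputs; finally one checks $Q-X_\M-Y-U$, i.e.\ $p(q|x_\M,u,y)=p(q|x_\M)$, which holds because both the MAC and the test channel $p(u|y)$ are memoryless. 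With these two repairs your plan coincides with the paper's proof; the cut-set row and the cardinality count are essentially as in the paper.
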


%\begin{floatEq}

%\hrule
%\end{floatEq}
%

\begin{remark}
For $M\!=\!2$, Theorem \ref{thm:upperbound-statement} reduces to \cite[Theorem~3]{SaeediKramer16}.
\end{remark}

\begin{proof}[Proof Outline]
We start with the following $n$-letter upper bound (see Appendix \ref{ap-thm:upperbound-statement}): %(see Appendix \ref{ap-thm:upperbound-statement}):
\begin{align}
nR&\leq nMC\!-\!I(X_\M^n)\label{eq:ubound1}\\
%nR&\leq nC+I(X_\M^n;Y^n|X^n_m),\quad \forall m\in\M\\
nR&\leq n|\S|C\!+\!I(X_\M^n;Y^n|X_\S^n),\quad\forall\S\!\subseteq\! \M.\label{eq:ubound2}
\end{align}
%where $I(X_\S^n)$ is  defined in \eqref{eq:multi-inf}.

For any sequence $U^n$, we have
\begin{align}
I(X_\M^n)\geq& I(X_\M^n)- I(X_\M^n|U^n)\nonumber\\
=&\left[\sum_{m=1}^M I(X_{m}^n;U^n)\right]\!-I(X_\M^n;U^n)\nonumber\\
=& (M\!-\!1) H(U^n)\!-\!\!\left[\sum_{m=1}^M \!H(U^n|X_m^n)\!\right]\!+\!H(U^n|X_\M^n).\label{eq:Xm}
\end{align}
By substituting \eqref{eq:Xm} into \eqref{eq:ubound1}, we obtain
\begin{align}
nR\leq& nMC\!-\!(M\!-\!1) H(U^n)\nonumber\\&+\left[\sum_{m=1}^M\! H(U^n|X_m^n)\!\right]\!\!-\!H(U^n|X_\M^n).\label{rewriteU}
\end{align}

We now choose $U_i$, $i=1,\ldots,n$, to be the output of a memoryless channel $p_{U|Y}(u_i|y_i)$ with  input $Y_i$. The auxiliary channel $p_{U|Y}(.|.)$ will be optimized later. With this choice, we single letterize \eqref{eq:ubound1} and \eqref{eq:ubound2} and obtain
\begin{align}
R\leq& MC \!-\!(M\!-\!1) H(U|Q)\!+\!\sum_{m=1}^M H(U|X_mQ)\!-\!H(U|X_\M )\label{upp1}\\
R\leq& |\S|C+I(X_{\bar{\S}};Y|X_\S Q),\qquad \forall \S\subseteq\M.\label{upp2}
\end{align}
%Here, $Q$ is a time-sharing random variable that is defined based on $U$;  hence the order of maximization and minimization in \eqref{thm:upperbound}. 
Details of the proof are presented in Appendix \ref{ap-thm:upperbound-statement}.%\cite[Appendix C]{SaeediKramerShamaiLong17}.%This concludes the proof of Theorem \ref{thm:upperbound-statement}. 
\end{proof}

\section{The Symmetric Gaussian C-RAN}
\label{sec:GaussianCRAN}
First, we specialize Theorem \ref{thm:LB} to the symmetric Gaussian C-RAN  defined in \eqref{eq:GMACdef1}--\eqref{eq:GMACdef2} where $P_m=P$ for all $m=1,\ldots,M$. Choose $(X_1,\ldots,X_M)$ to be jointly Gaussian with the covariance matrix
\begin{align}
\mathbf{K}_{M}(\rho)=\left[\begin{array}{cccc}P & \rho P &\ldots &\rho P\\ \rho P&P&\ldots& \rho P\\\vdots&&\ddots&\vdots\\ \rho P&\ldots&\rho P&P\end{array}\right].\label{cov-mat}
\end{align}

 \begin{remark}
Choosing $(X_1,\ldots,X_M)$ to be jointly Gaussian (and/or symmetric) is not necessarily optimal for \eqref{en:en}--\eqref{en:de}, but it  gives a lower bound on the capacity.
 \end{remark}
\begin{theorem}
\label{cor:lb-Gaussian}
The rate $R$ is achievable if it satisfies the following constraints  for some non-negative parameter $\rho$, $0\leq \rho\leq 1$:
\begin{align}
&R\leq MC-\frac{1}{2}\log\left(\frac{P^M}{\det(\mathbf{K}_{M}(\rho))}\right)\label{const1}\\
%&R\leq \frac{M}{M-L}\frac{1}{2}\log\left(\frac{\det\left(\mathbf{K}_{L,Y}\right)}{\det(\mathbf{K}_L)}\right)+\frac{L}{M-L}\frac{1}{2}\log\left(\frac{P^M}{\det(\mathbf{K}_{M})}\right),\qquad \forall L=1,\ldots,M-1.\label{const2}\\
&R\leq \frac{1}{2}\log (1+PM(1+(M-1)\rho)).\label{const3}
\end{align}
\end{theorem}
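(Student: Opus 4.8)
The plan is to apply Theorem~\ref{thm:LB} with the symmetric input distribution $p_{X_\M}$ chosen to be the zero-mean jointly Gaussian law with covariance matrix $\mathbf{K}_M(\rho)$ of \eqref{cov-mat}, and then to evaluate the multivariate mutual information $I(X_\M)$ and the mutual information $I(X_\M;Y)$ in closed form, so that \eqref{reg:3user1} becomes \eqref{const1} and \eqref{reg:3user2} becomes \eqref{const3}.

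First I would verify that the chosen distribution is admissible. Writing $\mathbf{K}_M(\rho)=P\big((1-\rho)\mathbf{I}+\rho\,\mathbf{1}\mathbf{1}^{\mathsf{T}}\big)$, its eigenvalues are $P(1-\rho)$ with multiplicity $M-1$ and $P\big(1+(M-1)\rho\big)$ with multiplicity one; both are non-negative for $0\le\rho\le1$, so $\mathbf{K}_M(\rho)$ is a legitimate covariance matrix, and each coordinate has variance $P$, matching the block power constraint \eqref{eq:GMACdef2}. The law is symmetric in the sense of \eqref{symm1}, since by construction every $|\S|$-element subset of $(X_1,\dots,X_M)$ has covariance matrix $\mathbf{K}_{|\S|}(\rho)$, independent of which indices are selected. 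One minor technicality is that the codebook underlying Theorem~\ref{thm:LB} is generated i.i.d.\ per letter and then restricted to jointly $\epsilon$-typical tuples, so the realized per-block powers are only within $O(\epsilon)$ of $P$; this is handled in the usual way by using a slightly deflated variance and letting $\epsilon\to0$, which I would note without belaboring.

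Next I would compute the two information quantities. Using the definition \eqref{eq:multi-inf} and the Gaussian entropy formula $h\big(\mathcal{N}(\mathbf{0},\mathbf{K})\big)=\frac12\log\big((2\pi e)^{d}\det\mathbf{K}\big)$ for a $d$-dimensional Gaussian, $I(X_\M)=\sum_{m=1}^M h(X_m)-h(X_\M)=\frac{M}{2}\log(2\pi e P)-\frac12\log\big((2\pi e)^M\det\mathbf{K}_M(\rho)\big)=\frac12\log\big(P^M/\det\mathbf{K}_M(\rho)\big)$, and substituting into \eqref{reg:3user1} gives \eqref{const1}. For the Gaussian MAC \eqref{eq:GMACdef1}, conditioned on $X_\M$ the output $Y$ equals $Z$ plus a constant, so $I(X_\M;Y)=h(Y)-h(Y\mid X_\M)=h(Y)-h(Z)$; since $\mathrm{Var}\big(\sum_{m}X_m\big)=\mathbf{1}^{\mathsf{T}}\mathbf{K}_M(\rho)\mathbf{1}=MP\big(1+(M-1)\rho\big)$ we get $\mathrm{Var}(Y)=1+MP\big(1+(M-1)\rho\big)$ and hence $I(X_\M;Y)=\frac12\log\big(1+MP(1+(M-1)\rho)\big)$, which substituted into \eqref{reg:3user2} gives \eqref{const3}.

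There is no genuine obstacle here; the result is a direct specialization of Theorem~\ref{thm:LB}. The step requiring the most care is confirming that the jointly Gaussian choice is at once symmetric in the sense demanded by Theorem~\ref{thm:LB}, feasible under the power constraint, and correctly plugged into both bounds — optionally using the identity $\det\mathbf{K}_M(\rho)=P^M(1-\rho)^{M-1}\big(1+(M-1)\rho\big)$ to render \eqref{const1} more explicitly. Note that the hypothesis $\rho\ge0$ is precisely what keeps $\mathbf{K}_M(\rho)$ positive semidefinite for every $M$; allowing $\rho$ down to $-1/(M-1)$ would enlarge the region but is outside the scope of the present statement.
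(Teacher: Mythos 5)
Your proposal is correct and follows essentially the same route as the paper: Theorem~\ref{cor:lb-Gaussian} is obtained by specializing Theorem~\ref{thm:LB} to the jointly Gaussian symmetric input with covariance $\mathbf{K}_M(\rho)$, computing $I(X_\M)=\tfrac12\log\bigl(P^M/\det\mathbf{K}_M(\rho)\bigr)$ and $I(X_\M;Y)=\tfrac12\log\bigl(1+MP(1+(M-1)\rho)\bigr)$, exactly as you do. Your added checks (positive semidefiniteness for $0\le\rho\le1$, symmetry in the sense of \eqref{symm1}, and the standard handling of the power constraint) are consistent with, and slightly more explicit than, the paper's treatment.
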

\begin{remark}
One can recursively calculate $\det(\mathbf{K}_{M}(\rho))$: % to obtain
\begin{align}
%\det(\mathbf{K}_{L})=&P(1-(M-1)\rho)\det(\mathbf{K}_{{L-1}})\nonumber\\& +P^{2}(M-1)\rho(1-\rho)\det(\mathbf{K}_{{L-2}})\nonumber\\
%\det\left(\mathbf{K}_{L,Y}\right)=&(1+PM(1+(M-1)\rho))\det(\mathbf{K}_{L})\nonumber\\&+L(L-2)P^2(1+(M-1)\rho)^2\det(\mathbf{K}_{{L-1}})\nonumber\\&-LP^3(1+(M-1)\rho)^2(L-1)(1-\rho)\det(\mathbf{K}_{{L-2}})
\det(\mathbf{K}_{M}(\rho))=&P^M\left(1-\rho^2\sum_{i=1}^{M-1}i(1-\rho)^{i-1}\right)\nonumber\\
=&P^M(1\!-\!\rho)^{M-1}\left(1\!+\!(M\!-\!1)\rho\right).
%&\det(\mathbf{K}_{1,Y})=(1+PM(1+(M-1)\rho))P-P^2(1+(M-1)\rho)^2\\
%&\det(\mathbf{K}_{L,Y})=(1+PM(1+(M-1)\rho))\det(\mathbf{K}_{L})-LP^L(1+(M-1)\rho)^2(1-\rho)^{L-2},\qquad L=2,\ldots,M.
\end{align}
\end{remark}

Let $R^{(\ell)}$ be the maximum achievable rate given by \eqref{const1}--\eqref{const3}. The RHS of \eqref{const1} is non-increasing in $\rho$ and the RHS of \eqref{const3} is increasing in $\rho$. Therefore, we have the following two cases for the optimizing solution $\rho^{(\ell)}$:
\begin{itemize}
\item If $MC\leq \frac{1}{2}\log\left(1+PM\right)$ then 
\begin{align}\rho^{(\ell)}=0\quad \text{and} \quad R^{(\ell)}=MC.\label{eq:rl0}
\end{align}
\item Otherwise, $\rho^{(\ell)}$ is the unique solution of $\rho$ in
\begin{align}
&\frac{1}{2}\log (1+PM(1+(M-1)\rho))\nonumber\\&\qquad=MC\!-\!\frac{1}{2}\log\!\left(\!\frac{1}{(1\!-\!\rho)^{M\!-\!1}\left(1+(M\!-\!1)\rho\right)}\!\right)\!\label{def:rl}
\end{align}
and we have
\begin{align}
R^{(\ell)}=\frac{1}{2}\log (1+PM(1+(M-1)\rho^{(\ell)})).\label{def:Rl}
\end{align}
\end{itemize}

We next specialize Theorem \ref{thm:upperbound-statement} to symmetric Gaussian~{C-RANs}. 
\begin{theorem}
\label{cor-up-Gaussian}
The rate $R$ is achievable only if there exists $\rho$, $0\leq \rho\leq 1$, such that the following inequalities hold for all $N\geq 0$:
\begin{align}
R\leq&MC -(M\!-\!1) \frac{1}{2}\log\left(2^{2R}+  N\right)-\frac{1}{2}\log(1+N)\nonumber\\\quad&\!+\! \frac{M}{2}\!\log\left(1\!+\!N\!+\!P\!\left((M\!-\!1)(1\!-\!\rho)(1\!+\!(M\!-\!1)\rho)\hspace{-.05cm}\right)\right)\!\!\label{bound:R}\\
R\leq& \frac{1}{2}\log\left( 1+PM(1+(M-1)\rho\right).\label{bound:cutR}
\end{align}
\end{theorem}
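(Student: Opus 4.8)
The plan is to specialize the \emph{$n$-letter} inequalities behind Theorem~\ref{thm:upperbound-statement} (see Appendix~\ref{ap-thm:upperbound-statement})---namely \eqref{eq:ubound2} with $\mathcal S=\emptyset$ and the identity \eqref{rewriteU}, which holds for \emph{every} sequence $U^n$---rather than the single-letter formula \eqref{thm:upperbound}; the latter has been relaxed by a worst-case choice of $Q$, and over Gaussian alphabets one could take $Q=X_\M$, collapsing the first branch to $MC$ and losing the term we want. First I would pick the auxiliary in Ozarow's style: $U_i=Y_i+\hat Z_i$ with $\hat Z^n\sim\mathcal N(0,N\,\mathbf{I}_n)$ independent of $(W,X_\M^n,Z^n)$ and $N\ge 0$ arbitrary; since \eqref{rewriteU} is valid for all $U^n$, every $N$ yields a separate inequality, which is why the claim must hold for all $N\ge0$. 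I also let $\mathbf{K}:=\tfrac1n\sum_i\mathrm{Cov}(X_{\M,i})$ be the time-averaged input covariance, whose diagonal is $\le P$ by \eqref{eq:GMACdef2}, and set $\rho:=\max\!\big\{0,\ \tfrac{\mathbf{1}^\top\mathbf{K}\mathbf{1}-\mathrm{tr}\,\mathbf{K}}{(M-1)\,\mathrm{tr}\,\mathbf{K}}\big\}$; positive semidefiniteness of $\mathbf{K}$ forces $\mathbf{1}^\top\mathbf{K}\mathbf{1}\le M\,\mathrm{tr}\,\mathbf{K}$, so $\rho\in[0,1]$, and this single $\rho$ will serve both inequalities.

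The cut bound \eqref{bound:cutR} is the easy half. From \eqref{eq:ubound2} with $\mathcal S=\emptyset$ and Fano, $nR\le I(X_\M^n;Y^n)+n\epsilon_n\le\sum_i I(X_{\M,i};Y_i)+n\epsilon_n\le\sum_i\tfrac12\log\!\big(1+\mathbf{1}^\top\mathbf{K}_i\mathbf{1}\big)+n\epsilon_n\le\tfrac n2\log\!\big(1+\mathbf{1}^\top\mathbf{K}\mathbf{1}\big)+n\epsilon_n$, using $I(X_{\M,i};Y_i)=h(Y_i)-h(Z_i)\le\tfrac12\log(1+\mathrm{Var}(\sum_m X_{m,i}))$ and concavity of $\log$; then $\mathbf{1}^\top\mathbf{K}\mathbf{1}\le\mathrm{tr}\,\mathbf{K}\,(1+(M-1)\rho)\le MP(1+(M-1)\rho)$, and $n\to\infty$ gives \eqref{bound:cutR}.

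For \eqref{bound:R} I would estimate the three terms of \eqref{rewriteU} with the Gaussian $U^n$. The last term is exact: $H(U^n|X_\M^n)=h(Z^n+\hat Z^n)=\tfrac n2\log(2\pi e(1+N))$, since $X_\M^n$ is a deterministic function of $W$ and $Z^n,\hat Z^n$ are independent Gaussian. The first term is the heart of the argument: the vector entropy power inequality applied to $U^n=Y^n+\hat Z^n$ gives $2^{\frac2n H(U^n)}\ge 2^{\frac2n h(Y^n)}+2\pi eN$, and $h(Y^n)=I(W;Y^n)+h(Y^n|W)\ge(nR-n\epsilon_n)+h(Z^n)$ by Fano and $h(Y^n|W)=h(Z^n)$, so $H(U^n)\ge\tfrac n2\log\!\big(2\pi e\,(2^{2(R-\epsilon_n)}+N)\big)$; this is where $2^{2R}$ appears. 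For the middle terms, $H(U^n|X_m^n)\le\sum_i h(U_i|X_{m,i})\le\sum_i\tfrac12\log(2\pi e\,v_{m,i})$, where $v_{m,i}:=\mathrm{Var}(U_i)-\mathrm{Cov}(U_i,X_{m,i})^2/\mathrm{Var}(X_{m,i})$ is the linear-MMSE error variance, an upper bound on the true conditional variance; $v_{m,i}$ is concave in $\mathbf{K}_i$ (affine minus a ratio of the form $b^2/c$ with $b,c$ affine), so one Jensen step for the concave $v_{m,i}(\cdot)$ and one for $\log$ give $\tfrac1n\sum_i\tfrac12\log(2\pi e\,v_{m,i})\le\tfrac12\log(2\pi e\,\bar v_m)$ with $\bar v_m$ built from $\mathbf{K}$. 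Substituting into \eqref{rewriteU}, the $\tfrac12\log2\pi e$ terms cancel since $-(M-1)+M-1=0$, leaving
\[
R\le MC-(M-1)\tfrac12\log(2^{2(R-\epsilon_n)}+N)-\tfrac12\log(1+N)+\tfrac12\textstyle\sum_{m}\log\bar v_m .
\]

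The final step collapses $\sum_m\log\bar v_m$ to the single-parameter form. As a function of $\mathbf{K}$, $\sum_m\log\bar v_m$ is concave and invariant under relay permutations (using channel symmetry \eqref{chansym}), so replacing $\mathbf{K}$ by $\tfrac1{M!}\sum_\pi\mathbf{P}_\pi\mathbf{K}\mathbf{P}_\pi^\top=\mathbf{K}_M(\rho')$, with $P'=\tfrac1M\mathrm{tr}\,\mathbf{K}\le P$ and $\rho'=\tfrac{\mathbf{1}^\top\mathbf{K}\mathbf{1}-\mathrm{tr}\,\mathbf{K}}{(M-1)\mathrm{tr}\,\mathbf{K}}$, only increases it; a short computation gives $\sum_m\log\bar v_m\le M\log\!\big(1+N+P'(M-1)(1-\rho')(1+(M-1)\rho')\big)$. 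This is nondecreasing in $P'$ on $[0,P]$, and for $\rho'<0$ it is dominated by its value at $\rho'=0$, so one may inflate $P'$ to $P$ and truncate a negative $\rho'$ at $0$, recovering the $\rho=\max\{0,\rho'\}$ fixed above and exactly the last term of \eqref{bound:R}; letting $n\to\infty$ (along a convergent subsequence of $\rho^{(n)}$) finishes the proof. I expect the only genuinely nontrivial inequality to be the entropy-power step together with Fano; the most error-prone part will be the bookkeeping in this last paragraph---keeping one $\rho\in[0,1]$ consistent across all $N\ge0$ and across both branches while inflating the second moments to the power limit and truncating at $0$.
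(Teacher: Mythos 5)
Your proposal is correct, and it reaches \eqref{bound:R}--\eqref{bound:cutR} by a somewhat different route than the paper. The paper stays with the single-letter bound \eqref{thm:upperbound}: it applies the \emph{conditional} EPI to $h(U|Q)$ and gets the $2^{2R}$ term from the (implicit) use of the second branch with $\S=\varnothing$, namely $R\le I(X_\M;Y|Q)=h(Y|Q)-\tfrac12\log 2\pi e$, which is what justifies the second inequality in \eqref{eq:epi}; it then drops $Q$ from the conditioning in $h(U|X_mQ)$, and invokes concavity plus permutation symmetry of the resulting functionals of $p(x_\M)$ together with the conditional maximum-entropy lemma to reduce to a symmetric Gaussian covariance with a single $\rho$. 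You instead bypass \eqref{thm:upperbound} entirely and work at the $n$-letter level, getting the $2^{2R}$ term from Fano plus the vector EPI ($h(Y^n)\ge nR-n\epsilon_n+h(Z^n)$), and carrying out the single-letterization by hand: per-letter LMMSE/max-entropy bounds, Jensen over time, and permutation averaging of the time-averaged covariance, with $\rho$ defined explicitly from that covariance. This buys some transparency the paper leaves implicit---one fixed $\rho\in[0,1]$ valid simultaneously for all $N\ge0$ and for both \eqref{bound:R} and \eqref{bound:cutR}, the monotonicity in $P'$ and the truncation of negative correlation at $\rho=0$, and the subsequence argument in $n$---at the cost of redoing bookkeeping that Theorem \ref{thm:upperbound-statement} already packages. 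One remark: your stated reason for avoiding \eqref{thm:upperbound} (that $Q=X_\M$ collapses the first branch to $MC$) is not quite right, since the inner maximization over $Q$ applies to the \emph{minimum} of the two branches and $Q=X_\M$ drives the second branch (at $\S=\varnothing$) to zero; the paper's single-letter route survives the adversarial $Q$ precisely because its EPI step couples the two branches through $h(Y|Q)\ge R+\tfrac12\log2\pi e$, which is the single-letter analogue of your Fano step. This mischaracterization does not affect the validity of your argument.
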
% and consider
%\begin{align}
%R\leq& MC \!-\!(M\!-\!1) h(U|Q)\!+\!\!\sum_{m=1}^M\! h(U|X_mQ)\!-\!h(U|X_\M )\label{Gaussianupp1}\\
%R\leq& I(X_\M;Y|Q).\label{lowerhy}
%\end{align}

\begin{proof}[Proof of Theorem \ref{cor-up-Gaussian}]
Set
\begin{align}
U_i=Y_i+Z_{N,i},\qquad i=1,\ldots,n\label{GaussianU}
\end{align}
where $\{Z_{N,i}\}_{i=1}^n$ are identically distributed according to the normal distribution $\mathcal{N}(0,N)$ and are independent from each other and  $X_1^n,\ldots,X_M^n$. The variance $N$ is to be optimized. %With the choice of $U^n$ in  \eqref{GaussianU}, we have
%\begin{align}
%h(U^n|X_\M^n)=\frac{1}{2}\log\left(2\pi e(1+N)\right).\label{hUcX}
%\end{align} 

%The rate $R$ is achievable on the symmetric $M-$relay Gaussian C-RAN network with orthogonal broadcast  link capacities $C$ only if there exists a distribution $p(x,y)$ such that the following two inequalities hold for all $N$:
%\begin{align}
%R\leq& MC \!-\!(M\!-\!1) H(U|Q)\!+\!\!\sum_{m=1}^M\! H(U|X_mQ)\!-\!H(U|X_\M )\label{Gaussianupp1}\\
%R\leq& I(X_\M;Y|Q).\label{lowerhy}
%\end{align}

%Let the distribution that optimises \eqref{}-\eqref{} have the covariance matrix $\mathbf{K}$.
In order to find a computable upper bound in \eqref{thm:upperbound}, we need to lower bound $h(U|Q)$.
% first answer the following question: Under the constraint \eqref{lowerhy}, which gives a lower bound on $h(Y|Q)$, what is the minimum $h(U|Q)$? 
Recall that $U$ is a noisy version of $Y$. We thus use the conditional  entropy power inequality \cite[Theorem 17.7.3]{CoverThomas}:% We use extremal inequalities of Liu and Viswanath \cite{}.
%Since $R^{(\ell)}$ (defined in \eqref{def:Rl}) is achievable, from \eqref{lowerhy} we have
%\begin{align}
%\frac{1}{2}\log (1+PM(1+(M-1)\rho^{(\ell)}))&=R^{(\ell)}\nonumber\\
%R^{(\ell)}&\leq I(X_\M;Y|Q).\nonumber\\
%&\leq I(X_\M;Y)\nonumber\\
%&\leq \frac{1}{2}\log(1+(PM+2P(\sum_{\{i,j\}\subseteq\M}\rho_{\{i,j\}})))\nonumber\\
%&=\frac{1}{2}\log(1+MP(1+(M-1)\rho))\label{eq:defr}
%\end{align}
%where in \eqref{eq:defr} we have defined $\rho$ to be
%\begin{align}\rho=\frac{1}{{M\choose 2}}\sum_{\{i,j\}\subseteq\M}\frac{\mathbb{E}\left[X_{i}X_j\right]}{P}.\label{defr}
%\end{align}
%Let
%\begin{align}
%&\tilde{Y}=\sum_{m=1}^M X_{m}+\tilde{Z}\\
%&\tilde{Y}=Y+\tilde{Z}
%\end{align}
%where $\tilde{Z}$ is Gaussian with zero mean and variance $\tilde{N}$.
%There always exists a $\tilde{N}$ such that
%\begin{align}
%R^{(\ell)}&=I(X_\M;\tilde{Y}|Q).
%\end{align}
%if $\tilde{N}\leq N$, we may write
%\begin{align}
%&U=\tilde{Y}+Z^\prime
%&Y=\tilde{Y}+Z^\prime
%\end{align}
%where $Z^\prime$ is Gaussian with zero mean and variance $N-\tilde{N}$.
\begin{align}
h(U|Q)\geq& \frac{1}{2}\log\left(2^{2h({Y}|Q)}+2\pi e N\right)\nonumber\\
\geq&\frac{1}{2}\log\left(2\pi e \left(2^{2R}+  N\right)\right).\label{eq:epi}
\end{align}
 Substituting \eqref{eq:epi} into the first constraint of \eqref{thm:upperbound}, we obtain:
\begin{align}
R\leq& MC -(M\!-\!1) \frac{1}{2}\log\left(2\pi e \left(2^{2R}+  N\right)\right)\nonumber\\&+\sum_{m=1}^M h(U|X_mQ)-h(U|X_\M )\nonumber\\
\leq &MC -(M\!-\!1) \frac{1}{2}\log\left(2\pi e \left(2^{2R}+  N\right)\right)\nonumber\\&+\sum_{m=1}^M h(U|X_m)-h(U|X_\M ).\label{eq:MI-bound}
\end{align}
%\\
%\leq &MC -(M\!-\!1) \frac{1}{2}\log\left(2^{2R}+  N\right)\nonumber\\&+\sum_{m=1}^M \frac{1}{2}\log\left(1+N+\sum_{i\neq m}K_{ii}-\left(\sum_{i\neq m}\sqrt{K_{ii}}\rho_{im}\right)^2+2\sum_{\{i,j\}\subseteq\M:\ m\notin\{i,j\}}\sqrt{K_{ii}K_{jj}}\rho_{ij}\right)-\frac{1}{2}\log(1+N)\\
%\stackrel{(a)}{\leq} &MC -(M\!-\!1) \frac{1}{2}\log\left(2^{2R}+  N\right)\nonumber\\&+\sum_{m=1}^M \frac{1}{2}\log\left(1+N+(M-1)P-P\left(\sum_{i\neq m}\rho_{im}\right)^2+2P\sum_{\{i,j\}\subseteq\M:\ m\notin\{i,j\}}\rho_{ij}\right)-\frac{1}{2}\log(1+N)\\
%\stackrel{(b)}{\leq} &MC -(M\!-\!1) \frac{1}{2}\log\left(2^{2R}+  N\right)\nonumber\\\quad&+M \frac{1}{2}\log\left(1+N+P((M-1)(1-\rho)(1+(M-1)\rho))\right)-\frac{1}{2}\log(1+N)\label{bound:R}\\
%\end{align}
%In $(a)$ we have used the fact that the RHS is increasing in $K_{ii}$ for all $i\in\M$ (\textcolor{red}{I'm verifying the proof}) and that $K_{ii}\leq P$, and in $(b)$ we have used Jensen's inequality and the definition of $\rho$ in \eqref{defr}. 

Now consider the second term in   \eqref{thm:upperbound} with $\S=\varnothing$: %. We  have%, we also have the following upper bound on $R$:
\begin{align}
R\leq& I(X_\M;Y|Q) \nonumber\\
\leq& I(X_\M;Y)\nonumber\\
=& h(Y)-h(Y|X_\M) \label{eq:MACupp}
\end{align}
Note that the RHSs of \eqref{eq:MI-bound}, \eqref{eq:MACupp} are both concave in $p(x_\M)$ and symmetric with respect to $X_1,\ldots,X_M$. Therefore, a symmetric $p(x_\M)$ maximizes them. Let  $\mathbf{K}$ denote the covariance matrix of an  optimal symmetric solution. We have
%Let  denote the covariance matrix of the optimizing distribution and define for all $i,j\in\M$, $i\neq j$:
\begin{align}
&\mathbf{K}_{ii}=P ,\quad \mathbf{K}_{ij}=P\rho.
\end{align}
%Note that we have chosen $K_{ii}=P$ because  turns out the bounds are increasing in $K$.
Using the conditional version of the maximum entropy lemma  \cite{Thomas87}, we can  upper bound the differential entropies that appear with positive sign in \eqref{eq:MI-bound} and \eqref{eq:MACupp} by their Gaussian counterparts, and $h(U|X_\M)$ and $h(Y|X_\M)$ can be written explicitly because the  channels from $X_\M$ to $U$ and $Y$ are Gaussian. We  obtain \eqref{bound:R} and \eqref{bound:cutR}. Note that the RHSs of both bounds are increasing in $P$ and therefore there is no loss of generality in choosing $\mathbf{K}_{ii}=P$ (among $\mathbf{K}_{ii}\leq P$).
\end{proof}

The upper bound of Theorem \ref{cor-up-Gaussian} and the  lower bound of Theorem \ref{cor:lb-Gaussian} are plotted in Fig.~\ref{fig:Gaussian8a} for $M=3$ and $P=1$, and they are compared with the lower bounds of message sharing \cite{DaiYu14}, and  compression \cite{Parkatal13}. One sees that our lower and upper bounds are close and they match over a wide range of $C$. Moreover, establishing partial cooperation among the relays through Marton's coding offers significant gains. Fig.~\ref{fig:Gaussian8b} plots the capacity bounds for $P=1$ and different values of $M$.
%\begin{figure}
%    \centering
%    \begin{subfigure}[b]{0.45\textwidth}
%    \centering
%        \input{GaussianN3P0p25}
%        \caption{$M=3$,  $P=0.5$}
%    \end{subfigure}
%     %add desired spacing between images, e. g. ~, \quad, \qquad, \hfill etc. 
%      %(or a blank line to force the subfigure onto a new line)
%  \begin{subfigure}[b]{0.45\textwidth}
%    \centering
%        \input{GaussianN8P0p25}
%        \caption{$M=8$, $P=0.25$}
%    \end{subfigure}
%     %add desired spacing between images, e. g. ~, \quad, \qquad, \hfill etc. 
%    %(or a blank line to force the subfigure onto a new line)
%    \caption{Capacity bounds as a function of $C$ for $M=3$,}\label{fig:Gaussian3}
%\end{figure}

      \begin{figure}[ht!]
    \centering
       \input{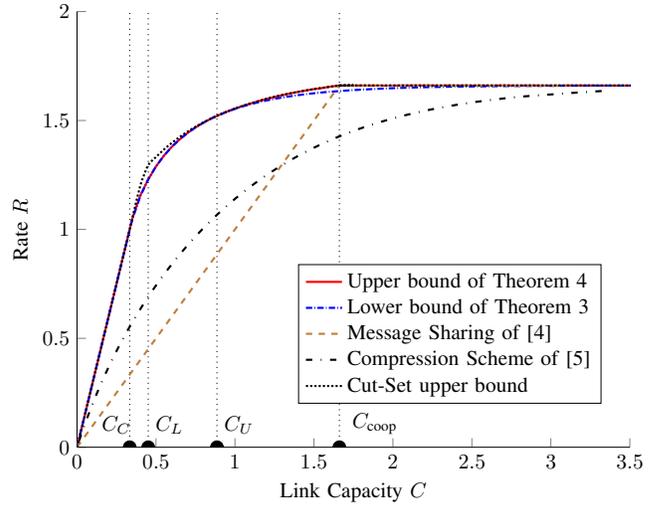}
        \caption{Capacity bounds as functions of $C$ ($M=3$, $P=1$).}
\label{fig:Gaussian8a}
    \end{figure}

\begin{figure}[ht!]
    \centering
       \input{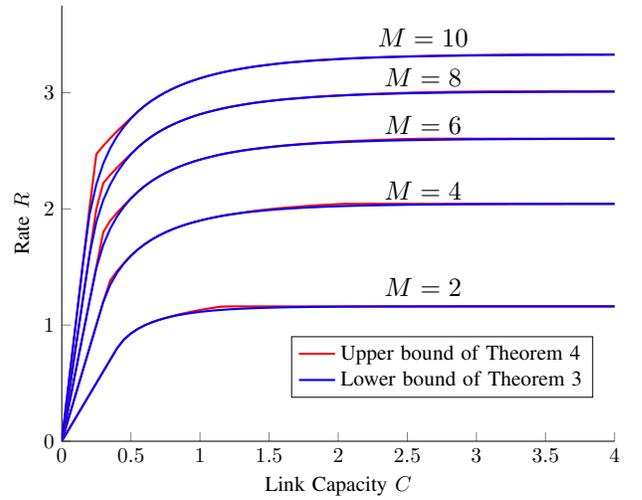}
    \caption{Capacity bounds as functions of $C$ for $P=1$ and different values of $M$.}
        \label{fig:Gaussian8b}
         %add desired spacing between images, e. g. ~, \quad, \qquad, \hfill etc. 
      %(or a blank line to force the subfigure onto a new line)
     %add desired spacing between images, e. g. ~, \quad, \qquad, \hfill etc. 
    %(or a blank line to force the subfigure onto a new line)
\end{figure}
\allowdisplaybreaks
We next compare the lower and upper bounds. Let
\begin{align}
&C_C=\frac{1}{2M}\log\left(1+PM\right)\label{eq:CC}\\
&C_L=\frac{1}{2M}\log\left(\frac{1+\frac{M^2}{2}P}{\left(\frac{M}{2(M-1)}\right)^{M-1}\frac{M}{2}}\right)\\
&C_U=\frac{1}{2M}\log\left(\frac{1+MP\left(1+(M-1)\rho^{(2)}\right)}{\left(1-\rho^{(2)}\right)^{M-1}\left(1+(M-1)\rho^{(2)}\right)}\right)\label{eq:CU}
\end{align}
where
\begin{align}
&\rho^{(2)}=\frac{M-2-\frac{1}{P}+\sqrt{(M-2-\frac{1}{P})^2+4(M-1)}}{2(M-1)}.
\end{align}
\begin{theorem}
\label{prop:range}
The lower bound of Theorem \ref{cor:lb-Gaussian} matches the upper bound of Theorem \ref{cor-up-Gaussian} if
\begin{align}
C\leq C_C\label{eq:cutset0}
\end{align}
or if
\begin{align}
C_L\leq C\leq C_U\label{eq:range}
\end{align}
where $C_C,C_L,C_U$ are defined in \eqref{eq:CC}--\eqref{eq:CU}.
\end{theorem}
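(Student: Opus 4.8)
The plan is to treat the two regimes \eqref{eq:cutset0} and \eqref{eq:range} separately; throughout I abbreviate $\beta(\rho)=P(M-1)(1-\rho)(1+(M-1)\rho)$ for the quantity in \eqref{bound:R} and use $\det(\mathbf{K}_M(\rho))=P^M(1-\rho)^{M-1}(1+(M-1)\rho)$. For $C\le C_C$ the matching is short: then $MC\le\tfrac12\log(1+PM)$, so by \eqref{eq:rl0} the lower bound equals $R^{(\ell)}=MC$; and letting $N\to\infty$ in \eqref{bound:R}, the coefficients of $\log N$ in its three logarithmic terms cancel ($-(M-1)/2-1/2+M/2=0$) and the right-hand side tends to $MC$ for every $\rho$, so $C^{(M)}\le MC=R^{(\ell)}$.

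For $C_L\le C\le C_U$ I would first record the parametrisation of the optimal $\rho$. Using the determinant identity, \eqref{const1} reads $R\le MC+\tfrac{M-1}{2}\log(1-\rho)+\tfrac12\log(1+(M-1)\rho)$, whose right-hand side is strictly decreasing in $\rho\in(0,1]$, whereas the right-hand side of \eqref{const3} is strictly increasing in $\rho$. Hence for $C>C_C$ the solution $\rho^{(\ell)}(C)\in(0,1)$ of \eqref{def:rl} is unique, makes \eqref{const1} and \eqref{const3} simultaneously tight — so that $2^{2R^{(\ell)}}=1+PMa$ and $R^{(\ell)}=MC+\tfrac{M-1}{2}\log b+\tfrac12\log a$, where $a=1+(M-1)\rho^{(\ell)}$ and $b=1-\rho^{(\ell)}$ — and an implicit-differentiation argument shows that $C\mapsto\rho^{(\ell)}(C)$ is continuous and strictly increasing. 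Substituting $\rho^{(\ell)}=\tfrac{M-2}{2(M-1)}$ into \eqref{def:rl} returns $C=C_L$, and substituting $\rho^{(\ell)}=\rho^{(2)}$ returns $C=C_U$, so \eqref{eq:range} is equivalent to $\tfrac{M-2}{2(M-1)}\le\rho^{(\ell)}(C)\le\rho^{(2)}$; since also $C_L\ge C_C$ (equality iff $M=2$), this lies in the ``otherwise'' branch. Achievability of $R^{(\ell)}$ is Theorem \ref{cor:lb-Gaussian}, so only the converse $C^{(M)}\le R^{(\ell)}$ remains.

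For the converse in \eqref{eq:range}, Theorem \ref{cor-up-Gaussian} provides a $\rho\in[0,1]$ for which \eqref{bound:R}--\eqref{bound:cutR} hold with $R=C^{(M)}$ for all $N\ge0$. If $\rho\le\rho^{(\ell)}$, then \eqref{bound:cutR} together with \eqref{def:Rl} gives $C^{(M)}\le\tfrac12\log(1+PM(1+(M-1)\rho))\le R^{(\ell)}$. If $\rho>\rho^{(\ell)}$, I would evaluate \eqref{bound:R} at
\[
N^{\star}=\frac{P(1-\rho^{(\ell)})(1+(M-1)\rho^{(\ell)})}{\rho^{(\ell)}}-1 ,
\]
observing that $N^{\star}\ge0$ is equivalent to $P(M-1)(\rho^{(\ell)})^{2}+(1-P(M-2))\rho^{(\ell)}-P\le0$, whose positive root is exactly $\rho^{(2)}$, so $N^{\star}\ge0$ holds precisely because $C\le C_U$. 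With $x=1+N^{\star}=Pab/\rho^{(\ell)}$ one computes $x+\beta(\rho^{(\ell)})=Pa^{2}b/\rho^{(\ell)}$ and $x+PMa=Pa^{2}/\rho^{(\ell)}$, which yields the identity $(x+\beta(\rho^{(\ell)}))^{M}=(b(x+PMa))^{M-1}ax$; combined with $2^{2R^{(\ell)}}=1+PMa$ and $R^{(\ell)}=MC+\tfrac{M-1}{2}\log b+\tfrac12\log a$, this shows that the right-hand side of \eqref{bound:R} evaluated at $\rho^{(\ell)}$, $N^{\star}$ and $R=R^{(\ell)}$ equals $R^{(\ell)}$ exactly. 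I then close the argument with two monotonicities: the right-hand side of \eqref{bound:R} is increasing in $\beta(\cdot)$ and $\beta(\cdot)$ is a downward parabola whose vertex $\tfrac{M-2}{2(M-1)}$ is $\le\rho^{(\ell)}$ by $C\ge C_L$, so replacing $\beta(\rho^{(\ell)})$ by the smaller $\beta(\rho)$ can only decrease it; and since it is strictly decreasing in $R$, the map $R\mapsto R-g(R)$, where $g(R)$ is the right-hand side of \eqref{bound:R} at $\rho^{(\ell)},N^{\star},R$, is strictly increasing, vanishes at $R^{(\ell)}$, and is nonpositive at $R=C^{(M)}$; hence $C^{(M)}\le R^{(\ell)}$. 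Together with achievability this gives $C^{(M)}=R^{(\ell)}$ throughout \eqref{eq:range}; for $M=2$ one has $\tfrac{M-2}{2(M-1)}=0$ and $C_L=C_C$, so the two regimes merge and the known $2$-relay result is recovered.

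The step I expect to be the main obstacle is guessing the auxiliary-noise variance $N^{\star}$ and then verifying the polynomial identity $(x+\beta(\rho^{(\ell)}))^{M}=(b(x+PMa))^{M-1}ax$: this is the precise algebraic coincidence that makes the entropy-power-inequality bound \eqref{bound:R} tight against the Marton-type bound \eqref{const1} at the optimal correlation. A more routine but still essential point is checking that nonnegativity of $N^{\star}$ and the location of the vertex of $\beta(\cdot)$ are controlled exactly by $\rho^{(2)}$ and $\tfrac{M-2}{2(M-1)}$, i.e. by $C_U$ and $C_L$; this is what pins down the precise range in \eqref{eq:range}.
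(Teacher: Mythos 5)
Your proof is correct and follows essentially the same route as the paper's: the same choice of auxiliary noise $N^{\star}$ (which the paper characterizes via $I^{\left(G,\rho^{(\ell)}\right)}(X_\M|U)=0$), the same crossing of the two bounds at $\rho^{(\ell)}$, with $C\leq C_U$ ensuring $N^{\star}\geq 0$ and $C\geq C_L$ ensuring the bound is non-increasing for $\rho\geq\rho^{(\ell)}$. Your explicit verifications (the algebraic identity at $x=1+N^{\star}$, the vertex $\tfrac{M-2}{2(M-1)}$ of $\beta(\cdot)$, the endpoint correspondences of $\rho^{(\ell)}$ with $C_L,C_U$, and the $N\to\infty$ argument for $C\leq C_C$) merely make explicit details the paper leaves implicit.
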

\begin{remark}
Theorem \ref{prop:range} recovers \cite[Theorem~5]{SaeediKramer16} for $M=2$.
\end{remark}
\begin{remark}
For $C\leq C_C$, no cooperation is needed among the transmitters and the capacity is equal to~{$MC$}.
\end{remark}
\begin{remark}
For $C$ large enough, full cooperation is possible through superposition coding and the capacity is 
\begin{align}
C^{(M)}=\frac{1}{2}\log\left(1+M^2P\right), \qquad C\geq C_{\text{coop}}\label{eq:Cfull}
\end{align}
where
\begin{align}
C_{\text{coop}}=\frac{1}{2}\log\left(1+M^2P\right).
\end{align}
The rate \eqref{eq:Cfull} is not achievable by Theorem \ref{cor:lb-Gaussian} except when $C\to \infty$. This rate is achievable by message sharing.
\end{remark}
\begin{proof}[Proof of Theorem \ref{prop:range}]
To find regimes of $P$ and $C$ for which the lower and upper bounds match, we mimic the analysis in \cite[Appendix F]{SaeediKramer16}. Consider the lower bound in Theorem~\ref{cor:lb-Gaussian}, and in particular its maximum achievable rate $R^{(\ell)}$ which is attained by $\rho^{(\ell)}$, see \eqref{eq:rl0}--\eqref{def:Rl}. If \eqref{eq:cutset0} holds, we have $\rho^{(\ell)}=0$, $R^{(\ell)}=MC$, and thus the cut-set bound is achieved. Otherwise, we proceed as follows.

Consider \eqref{bound:R}. Since $R\geq R^{(\ell)}$, and using the definition of $R^{(\ell)}$ in \eqref{def:Rl}, we can further upper bound \eqref{bound:R} and obtain
\begin{align}
R\leq  &MC -(M\!-\!1) \frac{1}{2}\log\left( 1+N+PM(1+(M-1)\rho^{(\ell)}\right)\nonumber\\\quad&+ \frac{M}{2}\log\left(1\!+\!N\!+\!P((M\!-\!1)(1\!-\!\rho)(1\!+\!(M\!-\!1)\rho))\right)\nonumber\\&-\frac{1}{2}\log(1+N).
\label{bound:notitght}
\end{align}
Call the RHS of \eqref{bound:notitght} $g_1(\rho)$ and the RHS of \eqref{bound:cutR} $g_2(\rho)$. Fix $N$ as a function of $\rho^{(\ell)}$ such that 
\begin{align}
I^{\left(G,\rho^{(\ell)}\right)}(X_\M|U)=0\label{eq:bigif}
\end{align} 
where  $I^{\left(G,\rho^{(\ell)}\right)}(X_\M|U)$ is $I(X_\M|U)$ evaluated for a fully symmetric Gaussian distribution with correlation factor $\rho^{(\ell)}$. One can verify that the following choice of $N$  satisfies \eqref{eq:bigif}:
\begin{align}
N=P\frac{(1-\rho^{(\ell)})(1+(M-1)\rho^{(\ell)})}{\rho^{(\ell)}}-1.
\end{align}
The right inequality in \eqref{eq:range} ensures $N\geq 0$.

With this choice of $N$,  $g_1(\rho)$ is exactly equal to 
\[MC-I^{\left(G,\rho^{(\ell)}\right)}(X_\M)\] 
 at $\rho=\rho^{(\ell)}$. Note that $\rho^{(\ell)}$ is  defined in \eqref{def:rl}, and thus $g_1(\rho)$ crosses $g_2(\rho)$ at $\rho^{(\ell)}$. Since $g_2(\rho)$ is increasing in $\rho$,  the maximum admissible rate by \eqref{bound:cutR} and \eqref{bound:notitght} matches $R^{(\ell)}$ if  $g_1(\rho)$ is non-increasing for $\rho\geq \rho^{(\ell)}$. This is ensured by the left inequality in \eqref{eq:range}.
\end{proof}
\section{Concluding Remarks}
We studied the downlink of symmetric C-RANs with multiple relays and a single receiver, and established lower and upper bounds on the capacity. The lower bound uses Marton's coding  to establish partial cooperation among the relays and improves on schemes that are based on message sharing and compression for scalar Gaussian C-RANs  (see Fig. \ref{fig:Gaussian8a}). %\ssb{about being oblivious or not and mentioning of the uplink/downlink duality.}
The upper bound generalizes the bounding techniques of \cite{KangLiuChong15,SaeediKramer16}. When specialized to symmetric Gaussian C-RANs, the lower and upper bounds meet over a wide range of $C$ and this range gets large as $M$ and/or $P$ get large.

\ifCLASSOPTIONcaptionsoff
  \newpage
\fi
%\newpage

\appendices
\section{Analysis of the Achievable Scheme}

\label{chert}
The  scheme fails only if one of the following events occur:
\begin{itemize}
\item \eqref{en:jt1} does not hold for any index tuple $(w_1^{\prime\star},\ldots, w_M^{\prime\star})$. This event has a vanishing error probability as $n\to \infty$ \cite[Lemma 8.2]{ElGamalKim} if we have \eqref{en:en}.
\item \eqref{de:jt2} does not hold for the original indices $(w_1,w^{\prime\star}_1,\ldots,w_M,w^{\prime\star}_M)$. This event has a vanishing error probability as $n\to\infty$ by \eqref{en:jt1} and the Law of Large Numbers.
\item \eqref{de:jt2} holds for indices $(\tilde{w}_1,\tilde{w}^{\prime}_1,\ldots,\tilde{w}_M,\tilde{w}^{\prime}_M)$ where $(\tilde{w}_1,\ldots,\tilde{w}_M)\neq(w_1,\ldots,w_M)$. We show that this event has a vanishing error probability as $n\to\infty$ if we have \eqref{en:de}.  Since the codebook is symmetric with respect to all messages, we assume without loss of generality that $w_1=\ldots=w_M=1$ and $w_1^\prime=\ldots=w_M^\prime=1$. 
Fix the sets $\S_1,\S_2,\S_3 \subset\M$ such that $\S_1\cup\S_2\cup\S_3=\M$.
Consider the case where 
\begin{align}
\label{eq:Cs}
\begin{array}{ll}
\tilde{w}_i=1, \tilde{w}_i^\prime=1& \forall i\in\S_1\\
\tilde{w}_i=1, \tilde{w}_i^\prime\neq1& \forall i\in\S_2\\
\tilde{w}_i\neq 1& \forall i\in\S_3.
\end{array}
\end{align}
We denote  \eqref{eq:Cs} by  $W(\S_1^3)$. We have
\begin{align}
&\Pr\!\left(\bigcup_{\substack{W(\S_1^3)}}\!\! (X_1^n(\tilde{w}_1,\tilde{w}^{\prime}_1),\ldots,X_M^n(\tilde{w}_M,\tilde{w}^{\prime}_M),Y^n)\in \mathcal{T}_\epsilon^n \!\right)\nonumber\\
&\leq 2^{n\sum_{m\in\S_3}(R_m+R^\prime_m)+n\sum_{m\in\S_2}R^\prime_m}\nonumber\\&\quad\quad \times\!\!\!\!\sum_{(x_1^n,\ldots,x_M^n,y^n)\in \mathcal{T}_\epsilon^n}\!\!\!\!p_{X^n_{\S_1}\tilde{X}^n_{\S_2}\hat{X}^n_{\S_3}Y^n}(x_{\S_1}^n,x_{\S_2}^n,x_{\S_3}^n,y^n)\nonumber\\
&\stackrel{(a)}{\leq} 2^{n\sum_{m\in\S_3}(R_m+R^\prime_m)+n\sum_{m\in\S_2}R^\prime_m}\nonumber\\&\quad\times\! 2^{n(H(X_\M Y)-H(X_{\S_1}Y)-\sum_{m\in\S_2\cup\S_3}\!H(X_{m})-\delta(\epsilon))}\!
\end{align}
where  $\tilde{X}_{\S_2}^n$ denotes $\{X_m^n(\tilde{w}_m,\tilde{w}_m^\prime)\}_{m\in\S_2}$ and $\hat{X}_{\S_3}^n$ denotes $\{X_m^n(\tilde{w}_m,\tilde{w}_m^\prime)\}_{m\in\S_3}$.
In step $(a)$, we use that (i) $(X_{\S_1}^nY^n)$ is ``almost independent" of $(\tilde{X}_{\S_2}^n,\hat{X}_{\S_3}^n)$ and (ii) the random sequences $X_m^n(\tilde{w}_m,\tilde{w}_m^\prime),\ {m\in\S_2}$, and $X_m^n(\tilde{w}_m,\tilde{w}_m^\prime), \ m\in\S_3$, are mutually ``almost independent". Note that we use the term almost independent, rather than independent, because we have assumed $w_1=\ldots=w_M=1$ and $w_1^\prime=\ldots=w_M^\prime=1$; i.e., we implicitly have a conditional probability and conditioned on $w_1^\prime=\ldots=w_M^\prime=1$, (i)-(iii) may not hold if we insist on exact independence. % with ``almost-inpdendent". %The reason is that $w^\prime_m$'s are generated from the $X_m^n$'s. 
This has been dealt with in \cite{MineroLimKim15,GroverWagnerSahai10,SaeediPrabhakaran14}. Following similar arguments, one can show that (i) and (iii) hold with ``almost independence". %In the analysis of hybrid coding, or in the analysis of achievability for channels with state (if we want to decode the state as well), or the analysis of replacing non-unique decoding with joint unique decoding, a similar issue happens and then one should argue that condition on  $w_1^\prime=\ldots=w_M^\prime=1$, we have ``almost independence"; i.e., we have additional terms in $\delta(\epsilon)$.}
The probability of the considered error event is thus arbitrarily small for large enough $n$ if
\begin{align}
&\sum_{m\in\S_3}(R_m+R^\prime_m)+\sum_{m\in\S_2}R^\prime_m\nonumber\\&\quad \quad\quad<I(X_\M;Y|X_{\S_1})+I(X_\M)-I(X_{\S_1}).\label{eq:achS1}
\end{align}
Inequality \eqref{eq:achS1} is satisfied by \eqref{en:de} when we choose $\bar{S}=\S_1$. Note that the inequalities with $\S_2\neq \varnothing$ are redundant.
%\textcolor{red}{to write the analysis maybe}
\end{itemize}

%Eliminating $r,r^\prime$ from \eqref{en:set1}-\eqref{en:set2}, \eqref{en:mc},  \eqref{de:jt3} we obtain
%\begin{align}
%&R\leq MC-M\max_{\S\subseteq\M}\frac{I(X_\S)}{|\S|}\\
%&R\leq |\S|C+I(X_\M;Y|X_\S)+\frac{|\S|}{M}(R+I(X_\M)-MC),\qquad \forall \S\subseteq\M
%\end{align}
%
%It turns out that under \eqref{symm}, we can write
%\begin{align}
%\max_{\S\subseteq\M}\frac{I(X_\S)}{|\S|}=\frac{I(X_\M)}{M},\label{simplify}
%\end{align}
%and thus prove achievability of all $R$ such that
%\begin{align}
%&R= MC-I(X_\M)\\
%&R\leq |\S|+I(X_\M;Y|X_\S),\qquad \forall \S\subseteq\M
%\end{align}
%
\section{Simplification for  Symmetric Networks with Symmetric Distributions}
\label{ap-simplify}
Choose $R_m=\tilde{R}$ and $R_m^\prime=\tilde{R}^\prime$  for all $1,\ldots,M$. The problem defined by \eqref{en:set1}, \eqref{en:set2}, \eqref{en:en}, \eqref{en:de} simplifies  using the definition in \eqref{symm1}:
\begin{align}
&\tilde{R},\tilde{R}^\prime\geq 0\\
&\tilde{R}+\tilde{R}^\prime\leq C\\
&M\tilde{R}=R\\
&|\S|\tilde{R}^\prime\geq I(X_\S)\qquad \forall \S\subseteq \M\label{torelax1}\\
&|\S|(\tilde{R}+\tilde{R}^\prime)\leq I(X_\S;Y|X_{\bar\S})\!+\!I(X_\M)\!-\!I(X_{\bar{\S}})\quad  \forall\S\subseteq\M.\label{torelax2}
\end{align}
We prove that the tightest inequality in \eqref{torelax1} and \eqref{torelax2} is given by $\S=\M$. Eliminating $\tilde{R}^\prime$ from the remaining inequalities concludes the proof.

Let $\S$ be any subset of $\M$ and $s_0$ be an element of $\M$ such that $s_0\notin\S$. This is possible if $\S\neq\M$. Define $\T=\M\backslash\{\S\cup s_0\}$. %, is such that $\M=\S\cup\T\cup s_0$.  
We show 
\begin{align}
\frac{I(X_\S)}{|\S|}\leq \frac{I(X_{S\cup\{s_0\}})}{|\S|+1}\label{simplify}
\end{align}
and  
\begin{align}
\frac{1}{|\S|+1}&\left(I(X_{\S\cup\{s_0\}};Y|X_{\T})+I(X_\M)-I(X_{\T})\right)\nonumber\\&\leq \frac{1}{|\S|}\left(I(X_\S;Y|X_{\T}X_{s_0})+I(X_\M)-I(X_{\T\cup s_0})\right).
\end{align}

The following  equalities come in handy in the proof:
\begin{align}
&I(X_{\S\cup \T})=I(X_\S)+I(X_\T)+I(X_\S;X_\T)\label{eq:useful1}\\
&I(X_\S)=\sum_{j=1}^{|\S|-1}I(X_{s_1}\ldots X_{s_j};X_{s_{j+1}}).\label{eq:useful2}
\end{align}
Suppose $\S=\{s_1,\ldots,s_{|\S|}\}$ and $\S^\prime=\{s_0,s_1,\ldots,s_{|\S|-1}\}$. We have
\begin{align}
|\S|I(X_{\S\cup s_0})&\!=\!|\S|I(X_\S)+|\S|I(X_\S;X_{s_0})\nonumber\\
&\!\geq\! |\S|I(X_\S)+\sum_{j=1}^{|\S|}I(X_{s_1}\ldots X_{s_j};X_{s_0})\nonumber\\
&\!\stackrel{(a)}=\!  |\S|I(X_\S)+\sum_{j=0}^{|\S|-1}I(X_{s_0}\ldots X_{s_j};X_{s_{j+1}})\nonumber\\
&\!\stackrel{(b)}{=}\!|\S|I(X_\S)\!+\!I(X_{\S^\prime})\!+\!I(X_{s_0}\!\ldots X_{s_{|\S|\!-\!1}};X_{s_{|\S|}}\hspace{-.05cm})\nonumber\\
&\!\geq\! |\S|I(X_\S)+I(X_{\S^\prime})\nonumber\\
&\!\stackrel{(c)}{=}\!(|\S|+1)I(X_\S)
\end{align}
where $(a)$ and $(c)$ hold by \eqref{symm1} and $(b)$ is by \eqref{eq:useful2}, written for~$\S^\prime$. 

Similarly, we have
\begin{align}
(|\S|\!+\!1)&\left(I(X_\S;Y|X_{\T}X_{s_0})+I(X_\M)-I(X_{\T\cup s_0})\right)\nonumber\\&\hspace{-1.15cm}-|\S|\left(I(X_{\S\cup\{s_0\}};Y|X_{\T})+I(X_\M)-I(X_\T)\right)\nonumber\\
&=I(X_\S;Y|X_\T X_{s_0})-|\S|I(X_{s_0};Y|X_\T)\nonumber\\&\quad+I(X_\M)\!-\!(|\S|\!+\!1)I(X_{\T\cup s_0})\!+\!|\S|I(X_\T)\nonumber\\
&\stackrel{(a)}{=}I(X_\S;Y|X_\T X_{s_0})-|\S|I(X_{s_0};Y|X_\T)\nonumber\\&\quad+I(X_\S)+I(X_{\T\cup s_0})+I(X_\S;X_{\{\T\cup s_0\}})\nonumber\\&\quad-(|\S|\!+\!1)I(X_{\T\cup s_0})+|\S|I(X_\T)\nonumber\\
&\stackrel{(b)}{=}I(X_\S;Y|X_\T X_{s_0})-|\S|I(X_{s_0};Y|X_\T)\nonumber\\&\quad+I(X_\S)\!+\!I(X_\S;X_{\T\cup s_0})\!-\!|\S|I(X_\T;X_{s_0})\nonumber\\
&=I(X_\S;YX_\T X_{s_0})\!-\!|\S|I(X_{s_0};YX_\T)\!+\!I(X_\S)\nonumber\\
&=H(X_\S)+I(X_\S)-H(X_\S|YX_\T X_{s_0})\nonumber\\&\quad-|\S|\left(H(X_{s_0})-H(X_{s_0}|YX_\T)\right)\nonumber\\
&\stackrel{(c)}{=} |\S|H(X_{s_0}|YX_\T)-H(X_\S|YX_\T X_{s_0})\nonumber\\
&\stackrel{(d)}\geq 0
\end{align}
where $(a)$ and $(b)$ are by \eqref{eq:useful1}, $(c)$ follows from \eqref{symm1} and $(d)$ follows from \eqref{symm1} and the symmetry of the channel in \eqref{chansym}. 
\section{Proof of Theorem \ref{thm:upperbound-statement}}
\label{ap-thm:upperbound-statement}
We first prove the multi-letter bound in \eqref{eq:ubound1} using Fano's inequality and the data processing inequality. For any $\epsilon>0$ we have
\begin{align}
nR\stackrel{(a)}{\leq}& H(\mathbb{V}_1,\ldots,\mathbb{V}_M)+n\epsilon\nonumber\\
=& \sum_{m=1}^MH(\mathbb{V}_m)-I(\mathbb{V}_\M)+n\epsilon\nonumber\\
\leq& \sum_{m=1}^MnC-I(\mathbb{V}_\M)+n\epsilon\nonumber\\
\stackrel{(b)}{\leq}& \sum_{m=1}^MnC-I(X^n_\M)+n\epsilon\label{eq:multi1}
\end{align}
where $(a)$ is by Fano's inequality and $(b)$ is by the data processing inequality as follows:
\begin{align}
I(\mathbb{V}_\M)=&\sum_{m=1}^nI(\mathbb{V}_1\ldots \mathbb{V}_m;\mathbb{V}_{m+1})\nonumber\\
\leq &\sum_{m=1}^nI(X^n_1\ldots X^n_m;X^n_{m+1})\nonumber\\
=&I(X^n_\M).
\end{align}

Similarly, for any subset $\S\subset\M$ and $\epsilon>0$, we have
\begin{align}
nR\leq& I(W;Y^n)+n\epsilon\nonumber\\
\leq& I(W;Y^n\mathbb{V}_{\S}X^n_\S)+n\epsilon\nonumber\\
=& I(W;\mathbb{V}_{\S}X^n_\S)+ I(W;Y^n|\mathbb{V}_{\S}X^n_\S)+n\epsilon\nonumber\\
\stackrel{(a)}{=}& I(W;\mathbb{V}_{\S})+ I(W;Y^n|\mathbb{V}_{\S}X^n_\S)+n\epsilon\nonumber\\
\leq& H(\mathbb{V}_\S)+ I(WX_\M^n;Y^n|\mathbb{V}_{\S}X^n_\S)+n\epsilon\nonumber\\
\stackrel{(b)}{\leq}& n|\S|C+ I(X_\M^n;Y^n|X^n_\S)+n\epsilon
\end{align}
where $(a)$ is because $W-\mathbb{V}_\S-X^n_\S$ forms a Markov chain and $(b)$ is because conditioning does not increase entropy and because $W\mathbb{V}_\M-X^n_\M-Y$ forms a Markov chain.

Next, we upper bound \eqref{reg:3user1} using \eqref{rewriteU} and single-letterize \eqref{rewriteU} as follows:
\begin{align}
nR&\leq nMC -(M\!-\!1)\sum_{i=1}^n H(U_i|U^{i-1})\nonumber\\&\quad+\sum_{m=1}^M \sum_{i=1}^nH(U_i|X_{\M}^nU^{i-1})-\sum_{i=1}^nH(U_i|X_{\M}^nU^{i-1})\nonumber\\
&\leq nMC -(M\!-\!1)\sum_{i=1}^n H(U_i|U^{i-1})\nonumber\\&\quad+\sum_{m=1}^M\! \sum_{i=1}^n\!H(U_i|X_{m,i}U^{i-1})-\sum_{i=1}^nH(U_i|X_{\M,i}U^{i-1})\nonumber\\
&=nMC -n(M\!-\!1) H(U_T|U^{T-1}T)\nonumber\\&\quad+\!n\!\!\sum_{m=1}^M \!H(U_T|X_{m,T}U^{T\!-\!1}T)\!-\!nH(U_T|X_{\M,T}U^{T\!-\!1}T)\nonumber\\
&=n\!\left(\!\!MC\! -\!(M\!-\!1) H(U|Q)\!\!\phantom{\sum_{m=1}^M\! \!H(U|X_mQ)\!-\!H(U|X_\M Q)}\right.\nonumber\\&\qquad\qquad\qquad \left.+\sum_{m=1}^M\! \!H(U|X_mQ)\!-\!H(U|X_\M Q) \!\!\right)
\end{align}
where $T$ is a uniform random variable on the set $\{1,\ldots,n\}$ independent from everything in the system, $Q$ is defined by $(U^{T-1}T)$ and $X_1,\ldots,X_M$, $Y$, and $U$ are defined by $X_{1,T},\ldots,X_{M,T}$, $Y_T$, and $U_T$, respectively. Note that $U-Y-X_1X_2-Q$ forms a Markov chain.

Finally, we expand \eqref{eq:ubound2} as follows.
\begin{align}
nR&\leq n|\S|C+I(X_{{\M}}^n;Y^n|X_\S^n)\nonumber\\
&= n|\S|C+\sum_{i=1}^nH(Y_i|X_{\S}^nY^{i-1})-\sum_{i=1}^nH(Y_i|X_{\M}^nY^{i-1})\nonumber\\
&\stackrel{(a)}{=} n|\S|C+\sum_{i=1}^nH(Y_i|X_{\S}^nY^{i-1}U^{i-1})\nonumber\\&\quad-\sum_{i=1}^nH(Y_i|X_{\M}^nY^{i-1}U^{i-1})\nonumber\\
&\stackrel{(b)}{\leq} n|\S|C+\sum_{i=1}^nH(Y_i|X_{\S,i}U^{i-1})-\sum_{i=1}^nH(Y_i|X_{\M,i}U^{i-1})\nonumber\\
&\stackrel{(c)}{=} n\left(|\S|C+I(X_{\bar{\S}};Y|X_\S Q)\right)
\end{align}
where $(a)$ is because $Y_i-X_\S^nY^{i-1}-U^{i-1}$ forms a Markov chain, $(b)$ is because $Y_i-X_{\M,i}-U^{i-1}$ forms a Markov chain, and $(c)$ is by a standard time sharing argument.
The cardinality of $\mathcal{Q}$ is bounded using the Fenchel-Eggleston-Carath\'eodory theorem \cite[Appendix A]{ElGamalKim} (see also \cite[Appendix B]{cribbing85}). %If $p_{U|Y}$ is fixed for all $

\bibliographystyle{IEEEtran}
\bibliography{bibographyMACNRelay}

% that's all folks
\end{document}